  \newcommand{\makeothertitle}{}
  \def\el{1}
\newcommand{\makeothertitle}{\maketitle}
 \newenvironment{keyword}{}}%
\DeclareMathOperator{\dom}{dom}
\newenvironment{proofx}{}
\title{Separation of \P\;and \NP}
\author{Reiner Czerwinski \inst{1}}
\institute{TU Berlin (Alumnus)}
\begin{document}
\makeothertitle
\begin{abstract}
  There have been many attempts to solve the $\P$ versus $\NP$\;problem. However, with a new proof method,
$\P\not=\NP$ can be proved. 

  A time limit is set for an arbitrary Turing machine and
  an input word is rejected on a timeout. The time limit goes toward $\infty$. Due to the halting problem, whether a word is accepted can only be determined at run time.
  It can be shown by Rice's theorem,
    if a finite set of words are to be checked, they all have to be tested by brute force.

\end{abstract}
\ifdefined\el
\begin{keyword}
  \P\; vs. \NP \sep halting problem \sep Rice's theorem
\end{keyword}
\fi


\section{Introduction}

There have been many attempts to solve the $\P$ versus $\NP$ problem.
Since the Cook-Levin theorem \cite{cook1971complexity}, \cite{levin1973universal} in the early 70s, the problem has been the most important unsolved problem in computer science. 
Unfortunately, most conventional methods of proof have failed. Relativizing proof \cite{baker1975relativizations}, natural proof \cite{natural} or algebrization\cite{aaronson2008new} could not solve the problem, and it was concluded that there is no simple proof for $\P\neq\NP$. However, this is a fallacy.
There are many barriers to proofs of the P versus NP problem. But this does not mean that there is no simple proof.

In \cite[page 24]{rudich2004computational} is mentioned that the set\\
$\{(M,x,1^t)\mid \text{ NTM }M\text{ accepts }x\text{ within }t\text{ steps}\}$
is $\NP$-complete.
A similar problem is considered in this paper. Does a deterministic TM accept any word with length $n$ within $t$ steps if $n$ and $t$ are unary coded?
\[ \{ (M,1^n,1^t) \mid \text{ TM } M \text{ accepts some } y \in \{0,1\}^n \text{ within } t \text{ steps }\} \]
This set is $\NP$-complete.
We analyze what happens when $t \to \infty$. With Rice's theorem, we show that one
has to test every $y$ in the worst case.
It is undecidable which $y$ are accepted by $M$. So it is also undecidable which $y$ will be accepted first for $t \to \infty$.

Although we use diagonalization indirectly via the halting problem, the proof is not relativizing. The proof is consistent with $\P^\EXP = \NP^\EXP$. It does not violate the relativization barrier.


%

\subsection{Procedure} \label{myproblem}
In this paper, the problem of determining whether a tuple in the set is considered:
\[ S = \{(M,1^n,1^t) \mid \text{ DTM } M \text{ accepts some } y \in \{0,1\}^n \text{ within } t \text{ steps} \}
\]
is contained. The problem is in \NP.

In section \ref{lintime} we analyze the time complexity of the set \[\{(M,x,1^t)\mid \text{ TM }M\text{ accepts }x\text{ within }t\text{ steps }\}\] This is equivalent to the problem $(M,1^0,1^t)\in S$ with arbitrary but fixed $M$.

Section \ref{pnp} shows that in the worst case, all $y\in\{0,1\}^n$ must be checked to decide whether $(M,1^n,1^t)\in S$. This means that the running time would be exponential.

\newcommand{\tmax}{T}
\newcommand {\UTM}{U}
\section{Preliminaries}
Let $M$ be a deterministic Turing machine with input alphabet $\Sigma=\{0,1\}$.
It is defined, that
$L(M) := \{x \mid M \text{ accepts } x \}$.
In this paper, we investigate what happens when the
running time of a TM is restricted. We additionally define:
\[ L(M,t) := \{x \mid M \text{ accepted } x \text{ within } t \text{ steps}\} \]

Single-tape TMs are used as Turing machines, as described in \cite{cook2006p}
with input alphabet $\Sigma =\{0,1\}$.

\section{Time Complexity of $L(M,t)$}\label{lintime}
In this section, the running time is considered to compute $L(M,t)$.
It is concluded that to check whether a TM $M$ in $t$ steps
accepts a word, $t$ steps must be calculated in the worst case.
Unfortunately, it takes $0(t)$ steps to find out for fixed $M$ and $x$ whether $x\in L(M,t)$ due to the linear speedup theorem. See \cite[p. 32]{papa1994}.

\begin{definition}
  Let $M$ be a TM and $x$ be an input word. Then,
  $t_M(x)$ is the running time of $M$ on input $x$ if $M$ terminates with it.
  $t_M$ is a partial function. $t_M$ is total iff $M$ terminates on every $x$. 
  And
  \[
    \tmax_M(n) := \max\bigl(\{t_M(x) \mid |x| \le n \text{ and } x\in\dom(t_M) \}\cup\{0\}\bigr)
    \]
\end{definition} see \cite[Chapter 2.4]{wegener2005complexity}.
\newcommand{\withtht}[1]{#1}
\newcommand{\withrec}[1]{\textcolor{green}{} }

 \newcommand{\del}[1]{}
 
 \newenvironment{proofremark}{\sffamily}{}

  \begin{lemma} \label{lin}

For an arbitrary \withtht{multi-taped} TM $M$ with input $x$, it takes $\Omega(t\withtht{/\log^2(t)})$ computation time in the worst case to find out whether $x$ is accepted by $M$ within $t$ steps, i.e., whether $x\in L(M,t)$.
  \end{lemma}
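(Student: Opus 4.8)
The plan is to prove this as a lower bound by diagonalization, essentially re‑deriving the deterministic time hierarchy theorem in the form we need. The guiding intuition is the one flagged in the introduction: deciding ``$x\in L(M,t)$'' is, up to poly‑logarithmic overhead, the same task as running $M$ on $x$ for $t$ steps, and by the halting problem one cannot anticipate whether — and when — $M$ accepts without doing essentially that work. So I would argue by contradiction: assume some multi‑taped TM $A$ that, on input $(\langle M\rangle,x,1^t)$, decides whether $x\in L(M,t)$ in time $o(t/\log^2 t)$, and derive an impossibility. (Here ``arbitrary $M$'' has to be read as: no algorithm beats this bound uniformly over all $M$, since for any single fixed $M$ and $x$ the answer is, non‑uniformly, a trivial threshold in $t$; the hard instance will be produced by the diagonalization itself.)

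The concrete steps are as follows. First I would observe that such an $A$ cannot even scan the whole unary block $1^t$, since that alone costs $t$ steps; hence any machine that uses $A$ as a subroutine may keep that block \emph{implicit}, representing $A$'s head position inside it by an $O(\log t)$‑bit binary counter and returning the symbol $1$ on every read there. Next I would build a multi‑taped TM $D$ which, on input $w$ encoding a machine $M_w$, sets $t:=|w|^2$ (any time‑constructible, strictly super‑linear bound with room to spare works), simulates $A$ on $(\langle M_w\rangle,w,1^{t})$ in this implicit way, and outputs the opposite verdict: $D$ rejects $w$ if $A$ reports ``$M_w$ accepts $w$ within $t$ steps'' and accepts otherwise. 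Each simulated step of $A$ costs a polylogarithmic factor in $t$ (logarithmic with a careful multi‑taped counter), so $D$ runs in $o(t/\log^2 t)\cdot O(\log^{2} t)=o(t)$ steps and therefore halts on every sufficiently long $w$ strictly within $t=|w|^2$ steps. Finally, take $w^{*}$ to be a padded encoding of $D$ itself, long enough to pass that threshold: then $A$ on $(\langle D\rangle,w^{*},1^{|w^{*}|^{2}})$ is asked exactly whether $D$ accepts $w^{*}$ within $|w^{*}|^{2}$ steps — which it does, or does not — while $D$ was defined to do the opposite of whatever $A$ answers. This contradiction shows that no such $A$ exists, i.e.\ deciding ``$x\in L(M,t)$'' needs $\Omega(t/\log^2 t)$ steps in the worst case, the worst case being witnessed by this $D$ in the role of $M$. (One can instead quote the time hierarchy theorem for a fixed hard language $L$ together with $x\in L\iff x\in L(M_L,|x|^{2})$, but then a little extra care with the logarithmic factors is needed, so I would present the self‑referential version.)

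The only place that needs genuine care — and hence the main obstacle — is the running‑time bookkeeping. I must guarantee that $D$ runs in \emph{at most} $t$ steps and not merely $O(t)$: otherwise the self‑application collapses, because $D$ on $w^{*}$ would not finish within $t$ steps and $A$'s verdict about it would carry no information. This is exactly why $t$ is chosen strictly super‑linear in $|w|$ with slack, why $1^{t}$ is never written out, and, if one wants to absorb the remaining constant factors cleanly, why the linear speedup theorem cited above for $L(M,t)$ is convenient. The second delicate point is the per‑step cost of simulating $A$: a binary‑counter update is $\Theta(\log t)$ in the worst case on a multi‑taped model, and it is the product of this overhead with the assumed running time of $A$ that produces the $\log^{2} t$ in the denominator — and since the lemma only asks for the weaker $\Omega(t/\log^{2}t)$, being generous with this factor is harmless. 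Everything else — an encoding convention giving $D$ arbitrarily long descriptions, time‑constructibility of $n\mapsto n^{2}$, and reducing the general ``input $x$'' case to a single hard $x$ — is routine.
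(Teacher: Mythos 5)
Your argument is correct in outline, but it takes a genuinely different route from the paper. The paper uses the time hierarchy theorem as a black box: pick a fully time-constructible $T$, invoke the theorem (with the Hennie--Stearns gap $T'\log T'\in o(T)$, choosing $T'=T/\log^2 T$) to get a fixed machine $M$ with $L(M)\in\DTIME(T)\setminus\DTIME(T/\log^2 T)$, and then observe that deciding $x_n\in L(M,T_M(n))$ on worst-case inputs $x_n$ already decides $L(M)$, so it cannot be done in $o(T/\log^2 T)$ time. You instead inline the diagonalization: assuming a decider $A$ for the uniform problem $(\langle M\rangle,x,1^t)\mapsto[x\in L(M,t)]$ running in $o(t/\log^2 t)$, you build $D$ that simulates $A$ with the unary clock kept implicit, flips the verdict, and is fed a padded encoding of itself --- essentially the hierarchy-theorem proof specialized to this problem, and your timing bookkeeping (super-linear time-constructible $t=|w|^2$, never writing $1^t$, $O(\log t)$ per-step counter overhead, halting strictly within $t$ on long enough inputs) is the standard correct way to make the self-application bite. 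Two differences are worth flagging. First, the quantifiers: your hard instance $D$ depends on the assumed algorithm $A$, so you prove hardness of the uniform problem in which $M$ is part of the input (which is in fact the shape of the set $S$ used later), whereas the paper's route produces one fixed hard $M$ valid against all algorithms --- the form that the fixed-$M$ phrasing of Lemma \ref{meng} implicitly leans on; under the lemma's loose ``worst case'' wording both readings are defensible, but you should state explicitly which version you establish. Second, the logarithms: since $D$ simulates the \emph{fixed} machine $A$ with only $O(\log t)$ overhead per step, your argument actually refutes $o(t/\log t)$-time deciders and thus yields the slightly stronger bound $\Omega(t/\log t)$; the paper's $\log^2$ comes from its particular choice of $T'$ inside the hierarchy theorem, and, as you note, being generous here is harmless for the stated $\Omega(t/\log^2 t)$.
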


  \withtht{
  \begin{proof}
    Let $T$ be a fully time-constructible function. According to the
    time hierarchy theorem\cite{HLS65}, there exists a TM $M$ with $L(M)\in \DTIME(T)$, but $L(M)\not\in \DTIME(T')$ if $T' * \log(T')\in o(T)$\cite[p. 112]{Homer2011}. Let $T'=T/\log^2(T)$. Then 
  $T'*\log(T')=T/\log^2(T) *\log(T/\log^2(T))\le T/\log(T)\in o(T)$.
\\Let $x_n$ with $n\in\bbbn$ be a sequence with $|x_n| \le n$ and $T_M(n) = t_M(x_n)$. To check, if $x_n\in L(M,T_M(n))$ one need $\Omega(T_M/\log^2(T_M))$ computation time.   
\end{proof}
}
\withrec{%
  \begin{proof}
Assume that for each TM $M$ there is a way to decide whether $x \in L(M,t)$ with less than $\Omega(t)$ steps. Then there would be a constructible TM $M'$ and 
$ x \in L(M,t) \Leftrightarrow x\in L(M',t')$ with
$\lim_{t\to\infty}\frac{t'}{t}=0$. Now let
\[\phi_M(x,t)=\begin{cases}
    1 & \text{ if } x\in L(M,t)\\
    0 & \text{ if } x\not\in L(M,t)\\
  \end{cases}
\]. Since $M'$ is constructible, there is a computable function $f$ with $M'=f(M)$
According to the recursion theorem there is a TM $N$ with $\phi_{f(N)}=\phi_{N}$. Thus $ x \in L(N,t) \Leftrightarrow x\in L(f(N),t)$ holds. This is valid if $L(N)=\emptyset$.Unfortunately, it is undecidable whether $x\in L(N)$. For any $x$ it must be explicitly calculated whether $x\in L(N,t)$. $L(N,t)$ would not be checked faster.
\end{proof}  
}

\begin{remark}
Lemma \ref{lin} 
can also be applied to single taped TMs as a special case of multi-taped TMs.
\end{remark}



\section{Proof of $\P \neq \NP$}\label{pnp}
\begin{lemma} \label{meng}
 Let $W=\{w_1,...,w_m\}$ be a finite set of words and $M$ be a TM.
 To find out if every $w \in W$ is not accepted by $M$ within $t$ steps,
 i.e. $W \cap L(M,t) = \emptyset$, one need a running time of $\Omega(m*t\withtht{/\log^2(t)})$ in worst case.
 
\end{lemma}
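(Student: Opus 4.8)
The plan is to reduce this to Lemma \ref{lin} by a counting argument: testing $m$ distinct words should cost $m$ times the cost of testing one word, provided the tests cannot share work. First I would observe that by Lemma \ref{lin} there is a TM $M$ (arising from the time hierarchy theorem) and, for each $n$, a hardest input $x_n$ with $|x_n|\le n$ and $T_M(n)=t_M(x_n)$, such that deciding $x_n\in L(M,T_M(n))$ requires $\Omega(T_M(n)/\log^2 T_M(n))$ steps. I would then build, from such an $M$, a machine $M^\star$ and a family $W$ of $m$ words that are ``independent'' in the sense that running $M^\star$ on one word reveals nothing about its behaviour on the others; a clean way to do this is to pad: let each $w_i$ encode a block index $i$ together with a copy of the hard input, and have $M^\star$ ignore all blocks except the one named, so that $M^\star$ on $w_i$ simulates $M$ on $x_n$. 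The $m$ simulations are then genuinely disjoint computations.

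Next I would argue the lower bound for the set-emptiness question $W\cap L(M^\star,t)=\emptyset$. Any algorithm answering this must, in the worst case, certify for every one of the $m$ words that it is \emph{not} accepted within $t$ steps; by the independence of the blocks, a correct certificate for $w_i$ conveys no information about $w_j$ for $j\neq i$, so the algorithm is forced to do the single-word work $\Omega(t/\log^2 t)$ separately for each $i$. Summing over $i=1,\dots,m$ gives the claimed $\Omega(m\,t/\log^2 t)$. More carefully, one would phrase this as an adversary/indistinguishability argument: if an algorithm spends $o(m\,t/\log^2 t)$ time, then by averaging it spends $o(t/\log^2 t)$ on some block, and the adversary can then flip the acceptance status of that block (using the hard instances from the time hierarchy theorem, which give two inputs indistinguishable in $o(t/\log^2 t)$ steps but with different membership in $L(M,t)$), contradicting correctness.

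The main obstacle I expect is making the ``independence of blocks'' rigorous enough to justify that work cannot be shared — this is exactly the point where an honest proof needs either a formal adversary argument or an appeal to a communication/decision-tree style lower bound, rather than the hand-wave that ``one has to test each word.'' In particular, one must ensure that the hard instances $x_n$ supplied by the time hierarchy theorem really do come in indistinguishable pairs with differing $L(M,t)$-membership at the relevant time bound $t=T_M(n)$, and that embedding $m$ of them into one machine $M^\star$ does not blow up $t$ in a way that weakens the bound. A secondary, more routine obstacle is bookkeeping the $\log^2$ factors through the padding and the linear-speedup normalisation so that the final bound is stated cleanly as $\Omega(m\,t/\log^2 t)$. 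Once the single-block hardness and the disjointness of the $m$ simulations are pinned down, the summation to $\Omega(m\,t/\log^2 t)$ is immediate.
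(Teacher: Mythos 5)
Your plan hinges on exactly the step that neither you nor the paper can supply: that the $m$ single-word tests cannot share work, so their costs add. You flag this yourself as ``the main obstacle'', but the sketch you offer for it does not go through. For a Turing machine that receives the whole instance (the code of $M^\star$, the words, and $1^t$) there is no well-defined quantity ``time spent on block $i$'', so the averaging step ``it spends $o(t/\log^2 t)$ on some block'' is meaningless outside a restricted query/decision-tree model, which is not the model the lemma speaks about; a general algorithm may analyze the code of $M^\star$ and amortize across blocks in ways an adversary argument of this shape cannot exclude. Moreover, the time hierarchy theorem does not provide pairs of inputs that are indistinguishable in $o(t/\log^2 t)$ steps yet differ in membership in $L(M,t)$: it only guarantees that each too-fast machine errs on \emph{some} input of the fixed, already-determined language, and the behaviour of the fixed $M^\star$ cannot be ``flipped'' after the fact. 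What your argument actually needs is a direct-sum (OR-composition) theorem for deterministic time --- that deciding whether any of $m$ independently hard membership questions is a yes-instance costs $m$ times the single-instance cost --- and no such theorem is known; for the instantiation the paper ultimately needs, $W=\{0,1\}^n$ with a given machine $M$, such additivity is essentially the statement $\P\neq\NP$ itself, so ``the summation is immediate'' is precisely the point that cannot be waved through. Note also that even if your padded construction were completed, it would prove the lemma only for a tailor-made $M^\star$ and specially encoded words, whereas the paper applies the lemma to all of $\{0,1\}^n$ for the machine appearing in the set $S$, so the intended application still would not follow.

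For comparison, the paper's own proof takes a different route --- induction on $m$, invoking Rice's theorem to argue that one cannot know which words $M$ accepts and must therefore test each one, and then adding $\Omega((m-1)\,t/\log^2 t)$ and $\Omega(t/\log^2 t)$ --- but it founders on the identical gap: Rice's theorem concerns undecidability of properties of $L(M)$, not time lower bounds for the finite, decidable question $W\cap L(M,t)=\emptyset$, and the additivity of worst-case lower bounds across words is asserted rather than proven (the worst-case instance for one word need not be hard simultaneously with the others, and nothing rules out shared work). So your proposal is not the paper's proof, but it is blocked by the same missing direct-sum step that invalidates the paper's argument; in its present form it is a plan with a hole where the theorem should be.
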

\begin{proof}
  Proof via induction over $|W|$.
  For $W=\{w_1\}$ it has been proven in Lemma \ref{lin}.
  In the inductive step
  let $M$ be an arbitrary deterministic TM and $W =\{w_1,\hdots,w_m\}$ be a finite set
  of input words.

  According to Rice's theorem, it is undecidable whether $W \cap L(M) = \emptyset$ or  $W \cap L(M) = \{w_m\}$ or $W \cap L(M)$ is something else.
  $W$ is finite, so
  \[\exists T_W \forall t \ge T_W \forall w \in W : w \in L(M)
    \Leftrightarrow w \in L(M,t)\]
  Assume that $t\ge T_W$. If one wants to check if $W \cap L(M,t) = \emptyset$, one has to check whether $\{w_1,\dots,w_{m-1}\}\cap L(M,t)=\emptyset  $ and $w_m\not\in L(M,t)$. This requires by inductions hypothesis $\Omega((m-1)*t\withtht{/\log^2(t)})+\Omega(t\withtht{/\log^2(t)})$
  computation time in the worst case.

If $t < T_W$ should be, then this property is undecidable.
So it is valid too that one needs $\Omega(m* t\withtht{/\log^2(t)} )$ computation time.  

\end{proof}
\begin{remark}
That each $x\in W$ must be tested independently of the other words from $W$,
can be seen especially if $M$ is a UTM and each $x\in W$ is the code of a TM.
\end{remark}
\begin{proofx}




\end{proofx}

Consider the following set from Section \ref{myproblem}:
\[S= \{ (M,1^n,1^t) \mid \text{ DTM } M \text{ and }\exists y \in \{0,1\}^n \text{ with } y\in L(M,t) \}\]
The problem whether $(M,1^n,1^t)$ is in $S$ is $\NP$-complete for an arbitrary $M$.
Moreover, it holds:
\[
  (M,1^n,1^t) \in S \Longleftrightarrow \exists y \in \{0,1\}^n : y \in \L(M,t)
  \Longleftrightarrow \{0,1\}^n \cap L(M,t) \not= \emptyset
\]
The set $\{0,1\}^n$ has the cardinality $2^n$.
Due to Lemma \ref{meng}, one needs a computation time of $\Omega(2^n*t\withtht{/\log^2(t)})$  to check whether a tuple is in $S$ in the worst case.


\section*{Acknowledgments}
I would like to thank Günter Rote for helping me make the paper smoother.
He gave me many helpful tips.
Furthermore, I thank Paul Spirakis, Hermann Vosseler, Gabriel Wolf, and Martin Dietzfelbinger for helping me improve some formulations.
I also thank Richard Lipton, Michael Sipser, Richard Borcherds, Martin Ritzert, Steven Cook, Nicolas Look,Szabolcs Ivan, Valentin Pickel, David Wellner, and Thomas Karbe for their valuable comments, which helped me craft the formulations more vividly.
I would also like to thank everyone who proofread the paper.

The paper was partly translated with www.DeepL.com/Translator (free version)

\nocite{papa1994}
 \bibliographystyle{splncs04}
 \bibliography{lit}
\end{document}